\newtheorem{theorem}{Theorem}
\newtheorem{definition}{Definition}
\newtheorem{lemma}{Lemma}
\newtheorem{remark}{Remark}
\newtheorem{assumption}{Assumption}
\begin{document}
\title{Adaptive Finite Time Stability of Delayed Systems via Aperiodically Intermittent Control and Quantized Control}

\author{Xiwei Liu,~\IEEEmembership{Senior Member, IEEE}, Hailian Ma
\thanks{This work was supported by the National Science Foundation of China under Grant No. 61673298; Shanghai Rising-Star Program of China under Grant No. 17QA1404500; Natural Science Foundation of Shanghai under Grant No. 17ZR1445700; the Fundamental Research Funds for the Central Universities of Tongji University.}
\thanks{Department of Computer Science and Technology, Tongji University, and with the Key Laboratory of Embedded System and Service Computing, Ministry of Education, Shanghai 201804, China. Corresponding author: Xiwei Liu. Email:xwliu@tongji.edu.cn}}

\maketitle
\begin{abstract}
In this brief, we set up the finite time stability (FnTSta) theory for dynamical systems with bounded time-varying delays via aperiodically intermittent control (AIC) and quantized control (QC). A more general QC is designed in this brief. The bound of time-varying delay is required to be less than the infimum in AIC. Two-phases-method (2PM) is applied to solve the FnTSta for delayed system, i.e., it divides the whole proof process into two phases, one phase is that the process for the norm of system error evolving from initial values to $1$, and the other is the process for the norm of system error evolving from $1$ to $0$. By proving that these two phases both use FnT to realize, the whole FnTSta for system is proved. We also design the adaptive rules and prove its validity rigorously. Furthermore, the obtained theories are used to discuss the finite time synchronization (FnTSyn) of neural networks as an application. Finally, some simulations are given to illustrate the effectiveness of our theoretical results.
\end{abstract}
\begin{IEEEkeywords}
Adaptive, Aperiodically intermittent control, Finite time, Quantized control, Time delays
\end{IEEEkeywords}

\section{Introduction}
In real world engineering applications, systems are preferred to reach stability in finite time instead of infinite time (such as exponential stability (ExSta) and asymptotic stability), since it has faster convergence rate and better performance. Finite time stability (FnTSta) problems (\cite{BB2000}-\cite{mr2018}) in dynamical systems have been extensively investigated, whose settling time (after this time point, the system error will be zero) depends on initial state. If the settling time is independent of initial state, then it is called fixed time stability (FxTSta, \cite{P2012}-\cite{luliuchen2016}). Due to the weak coupling among agents, the dynamical systems in real world can not achieve stability automatically. Therefore, it is necessary to design suitable spatiotemporal controllers for dynamical systems. As an important application of FnTSta/FxTSta, network finite (fixed) time synchronization (FnTSyn) has attracted more and more attention. For example, \cite{liujiang2016} investigated FnTSyn of memristor-based Cohen-Grossberg neural networks (NNs); \cite{liuchen2018} designed nonlinear coupling functions with or without control to realize FnTSyn and FxTSyn; FnT/FxT bipartite consensus was discussed in \cite{liucao2019}; etc.

{\bf Temporal control}: aperiodically intermittent control (AIC). Intermittent control (IC), where the control is discontinuous, has attracted interests from various areas. It is more economic and applicable in real world compared with continuous control strategies. Considering the wind power and solar power systems as real examples, AIC (\cite{liuchen2015a}-\cite{liuchen2015c}) was proposed as an improvement of periodically intermittent control (PIC, \cite{huang2009}-\cite{fengyang2016}) by loosening the requirement of periodicity between work time and rest time. The type of synchronization under IC mainly concentrates on the exponential synchronization (ExSyn), see \cite{gan2017}-\cite{chenwang19}, and ExSyn under adaptive IC was also discussed, see \cite{hujiang2015}-\cite{caili2018}. FnTSyn via IC has become a hot topic recently, for example, \cite{mei2014} considered the FnTSyn via adaptive PIC for drive-response systems and designed an adaptive control. Moreover, time delay is inevitable in real world due to limited speed of information processing or propagation, many researchers investigated FnTSyn via IC with time delay. For example, \cite{jingchen2016}-\cite{ganxaosheng2019} used the following inequalities (Lemma 3 in \cite{mei2014})
\begin{align*}
\left\{
\begin{array}{ll}
\dot{V}(t)\le -\alpha V^{\eta}(t), &lT\le t<lT+\theta T,\\
\dot{V}(t)\le0,&t\in lT+\theta T\le t<(l+1)T
\end{array}
\right.
\end{align*}
to investigate the FnTSyn via PIC, and the external controller in work time was composed of delay and also exist in rest time (see Eq. (5) in \cite{jingchen2016}, Eq. (3) in \cite{xuwu2019} and Eq. (2.5) in \cite{ganxaosheng2019}). \cite{jiangli2019} used the following inequalities (see Lemma 4)
\begin{align*}
\left\{
\begin{array}{ll}
\dot{V}(t)\le -\alpha V^{\eta}(t)+\beta V(t), &lT\le t<lT+\theta T,\\
\dot{V}(t)\le\beta V(t),&lT+\theta T\le t<(l+1)T
\end{array}
\right.
\end{align*}
to investigate the semiglobal FnTSyn with stochastic disturbance via PIC, and the external controller in work time was composed of delay and disappeared in rest time (see Eq. (6)).

{\bf Spatio control}--Quantized control (QC). In real applications, the signals transmission is always limited due to the bandwidth and capacity/storage of communication channels, so signals should be quantized before transmission, including uniform quantization (UQ, \cite{Nedic2009}, \cite{zhangzhang2017}), stochastic quantization (SQ, \cite{wangliao2015}, \cite{tangzhou2017}), logarithmic quantization (LQ, \cite{fuxie2005}-\cite{zhangzeng2018}), etc. In \cite{zhangzhang2017}, UQ was used to solve leader-following consensus problem of linear dynamical systems. Consensus tracking problem with SQ in second-order multi-agent systems was proposed in \cite{tangzhou2017}. \cite{fu2008} discussed several types of quantization and revealed that LQ has better performance than linear quantization in cases of quantized feedback control, quantized state estimation and multiplicative noise. Therefore, more and more researchers have applied LQ to signal transmission. \cite{xuyang2018} combined LQ with pinning IC to realize FnTSyn of dynamical systems. \cite{zhangli2019} investigated FxTSyn of complex networks via LQ pinning control, and \cite{zhangyang2019} continued to explore controllers with and without LQ to realize FnTSyn and FxTSyn.

Therefore, it is interesting to consider the FnTSta/FnTSyn problem for systems with delay both under AIC and QC. {\bf Our motivation is to design the controller as simple as possible, i.e., it should not be defined by time delays}. Most previous papers consider the problem by designing the controller with delay, see \cite{jingchen2016}-\cite{jiangli2019}. As for controllers without delay, \cite{jingzhang2019} studied the FnTSyn of networks with constant delay and AIC; \cite{yangprop} considered FnTSyn for NNs with proportional delay and designed an adaptive rule. When the controller is designed without time delay, time delay will surely appear in the analysis of FnTSta problem. However, to our best knowledge, there are no systematic means or theories for this case, while the FnTSta have already been set up in \cite{BB2000}. In \cite{wangchen2018} and \cite{wangchen2019}, when the author studied the FnT anti-synchronization in master-slave neural networks with time-varying delays, they split the proof process into two stages, from initial values to $1$ and from $1$ to $0$. Inspired by them, in \cite{liuli2020}, we proposed the definition called two-phase-method (2PM) and applied it to the FnT anti-synchronization of complex-valued NNs with asynchronous and bounded time delays. Recently, we used 2PM to set up the FnTsta theory for delayed systems \cite{liuarxiv}, where the delay can be time-varying, unbounded and non-differentiable, we also applied it to study the FnTSyn of complex networks. {\bf Since FnTSta problem via AIC and QC with delays is more complex than that in \cite{liuarxiv}, how to solve it is our aim in this brief}.

Contributions of this brief can be summarised as follows:

{\bf (1) A more generale QC is designed.} This new QC includes standard LQ, and we also present the necessary condition for QC, and show how to construct the QC.

{\bf (2) A controller without delay is designed and its validity for FnTSta is proved rigorously by using 2PM}. The bounded time delay can be time-varying and non-differentiable, which should be less than the infimum of control time span, and we obtain the criteria for FnTSta using 2PM and $\infty$-norm.

{\bf (3) An adaptive controller without delay is designed and the validity of adaptive rules is also proved.} Adaptive control is powerful in unknown environment, but for FnTSta problem it is more difficult for systems with adaptive coefficients than that with constant coefficients. \cite{liuarxiv} set up the adaptive theory for FnTSta with continuous controller, so we also consider the adaptive FnTSta via AIC and QC by 2PM.

The rest is organized as follows. Section \ref{model} presents the definitions and assumptions of AIC and a general quantizer, a lemma to ensure ExSta under AIC is also given. In Section \ref{theory}, by using 2PM, we set up the FnTSta theory for delayed system with small delay which should be less than the infimum of control time span, we also propose an adaptive controller and prove the validity of adaptive rules rigorously. Section \ref{app} applies the derived theories to FnTSyn of master-salve coupled NNs. Simulations are conducted in Section \ref{simu} to illustrate the correctness of our theoretical results. Section \ref{con} concludes this brief finally.

\section{Preliminaries}\label{model}
At first, we introduce a generalized definition of QC.

\begin{definition}\label{zer}
$q(\cdot): {R}\to\Pi$ is a quantizer, where $\Pi=\{\pm \pi_{j}: j=0,\pm 1,\pm 2,...\}\cup\{0\}$ with $\pi_{0}>0$ and $\pi_{j}>\pi_{j+1}$. For all $v\in R$, $q(v)$ is defined as:
\begin{align}
q(v)=\left\{\begin{array}{lc}
\pi_{j};&\frac{1}{1+\gamma_{j}}\pi_{j}<v\leq\frac{1}{1-\gamma_{j}}\pi_{j}\\
0;&v=0\\
-q(-v);&v<0
\end{array}\right.
\end{align}
\end{definition}

\begin{remark}
Obviously, if $\gamma_j$ is a constant, then it becomes the standard LQ in \cite{xuyang2018}. Moreover, this new quantizer is different from that in \cite{gaohuijun}-\cite{zhangzeng2018}, where they used heterogeneous $\gamma_j$ to represent different quantization levels for each dimension, but in the same dimension, the quantizer was still a standard LQ.
\end{remark}

In the following, we will introduce the necessary condition, the generality and advantage of the above defined quantizer.

At first, when $q(v)=\pi_j$, then $\frac{1}{1+\gamma_{j}}\pi_{j}<v\leq\frac{1}{1-\gamma_{j}}\pi_{j}$. Since the quantizer function holds for any real number, to ensure that the real set $R$ is the disjoint union of all the intervals like $(\frac{1}{1+\gamma_{j}}\pi_{j}, \frac{1}{1-\gamma_{j}}\pi_{j}]$, one should require
\begin{align}\label{req}
\frac{1}{1-\gamma_{j+1}}\pi_{j+1}=\frac{1}{1+\gamma_{j}}\pi_{j}
\end{align}
Therefore,
\begin{assumption}\label{as1}
The following condition holds:
\begin{align}\label{nece}
\frac{\pi_{j+1}}{\pi_{j}}=\frac{1-\gamma_{j+1}}{1+\gamma_{j}}
\end{align}
\end{assumption}

In \cite{xuyang2018}, the logarithmic quantizer is used with $\pi_j=\rho^j\pi_0$ and $\gamma_j=\delta$, therefore, according to (\ref{nece}), we have
\begin{align}
{\rho}=\frac{1-\delta}{1+\delta}
\end{align}
i.e., $\delta=(1-\rho)/(1+\rho)$.

On the other hand, if one let the $\gamma_j$ be constant, then from (\ref{nece}), we have that $\pi_j/\pi_{j+1}$ will also be a constant, i.e., $\pi_j$ will a geometric sequence. Otherwise, $\gamma_j$ are not constant, we will prove that $\gamma_j$ will converge to a constant. Since $\Pi$ is a given set, when $\pi_j$ is a geometric sequence, suppose $\pi_{j+1}/\pi_j=\rho\in (0,1)$, then from (\ref{nece}), one can get
\begin{align*}
\gamma_{j+1}=-\rho\gamma_j+(1-\rho)
\end{align*}
Solving this recurrence equation, we have
\begin{align*}
\gamma_j=(-\rho)^{j-1}\gamma_1+\frac{(-\rho)^{j-1}-1}{-\rho-1}(1-\rho){\longrightarrow}\frac{1-\rho}{1+\rho}, ~~{j\rightarrow\infty}
\end{align*}

Therefore, to generate more flexible quantizer, one can use $\pi_j$, $\gamma_j$, $\gamma_{j+1}$ and Assumption \ref{as1} to deduce $\pi_{j+1}$ step by step. For example, using only two alternative constants $\gamma$ and $\gamma^{\prime}\in (0,1)$, we can generate a new sequence with $\pi_0$ as:
\begin{align}\label{newquan}
&\Pi=\{\cdots, \pm\rho_1^{-2}\rho_2^{-2}\pi_0, \pm\rho_1^{-1}\rho_2^{-2}\pi_0, \pm\rho_1^{-1}\rho_2^{-1}\pi_0,\pm\rho_2^{-1}\pi_0, \nonumber\\
&\pm\pi_0, \pm\rho_1\pi_0, \pm\rho_1\rho_2\pi_0, \pm\rho_1^2\rho_2\pi_0, \pm\rho_1^2\rho_2^2\pi_0, \cdots\}\cup\{0\}
\end{align}
where $\gamma_{2l}=\gamma$, $\gamma_{2l+1}=\gamma^{\prime}$, and
\begin{align}\label{rho1rho2}
\rho_1=\frac{1-\gamma^{\prime}}{1+\gamma}\in (0, 1),~~\mathrm{and}~~ \rho_2=\frac{1-\gamma}{1+\gamma^{\prime}}\in (0, 1)
\end{align}
More complex sequences $\Pi$ can also be generated similarly.

For any $v\in R$, from Definition \ref{zer}, there exists an index $j(v)$, such that $\frac{1}{1+\gamma_{j}}\pi_{j}<v\le\frac{1}{1-\gamma_{j}}\pi_{j}$, which is equivalent to
\begin{align*}
(1-\gamma_{j})v\le q(v)=\pi_{j}<(1+\gamma_{j})v
\end{align*}
so we can find a Filippov solution $\gamma (v)\in[-\gamma_{j},\gamma_{j})$ such that $q(v)=(1+\gamma(v))v$.

\begin{assumption}\label{inf}
For the above quantizer in Definition \ref{zer}, we suppose that there is a constant $\overline{\gamma}$, such that
\begin{align}
\sup_j\gamma_j\le \overline{\gamma}<1
\end{align}
\end{assumption}

\begin{definition}
AIC strategy is defined as
\begin{align}
[0,+\infty)=&[t_0, t_1]\cup(t_1, t_2)\cup[t_2, t_3]\cup(t_3, t_4)\nonumber\\
&\cup\cdots[t_{2k}, t_{2k+1}]\cup(t_{2k+1}, t_{2k+2})\cup\cdots\label{AIC}
\end{align}
where the $k$-th intermittent window $[t_{2k}, t_{2(k+1)})$ is composed of control time span (CTS) $[t_{2k}, t_{2k+1}]$ and rest time span (RTS) $(t_{2k+1}, t_{2k+2})$, and $t_0=0$.
\end{definition}

\begin{assumption}\label{absolute}
For AIC (\ref{AIC}),
\begin{align}\label{abaic}
\inf_{k=0,1,\cdots} (t_{2k+1}-t_{2k})=\underline{\theta},~\sup_{k=0,1,\cdots} (t_{2k+2}-t_{2k})=\overline{\theta},
\end{align}
where $\underline{\theta}$ and $\overline{\theta}$ mean the time length for the minimum value of CTS and the maximum value of each intermittent window.
\end{assumption}

\begin{assumption}\label{sad}
The upper bound of time-varying delays in this paper should be less than the infimum of CTS, i.e.,
\begin{align}\label{restrict}
\tau(t)\le\tau^{\bullet}<\underline{\theta}
\end{align}
\end{assumption}

\begin{lemma} (\cite{liuchen2015b})\label{large}
For continuous nonnegative function $p(t): R\to R$
\begin{align}\label{pt}
\left\{
\begin{array}{ll}
\dot{p}(t)\le -m_1p(t)+m_2p(t-\tau(t));&t_{2k}\le t\le t_{2k+1}\\
\dot{p}(t)\le m_3p(t)+m_2p(t-\tau(t));&t_{2k+1}<t<t_{2k+2}
\end{array}
\right.
\end{align}
where $m_1, m_2, m_3$ are nonnegative constants. Suppose Assumption \ref{absolute} and \ref{sad} hold, if $m_1>m_2$, and
\begin{align}
\sigma^{\bullet}(\underline{\theta}-\tau^{\bullet})-(m_2+m_3)(\overline{\theta}-
\underline{\theta})>0,
\end{align}
where $\sigma^{\bullet}>0$ is the unique positive solution of the equation
\begin{align}
\sigma^{\bullet}-m_1+m_2e^{\sigma^{\bullet}\tau^{\bullet}}=0,
\end{align}
then ExSta for $p(t)$ can be realized.
\end{lemma}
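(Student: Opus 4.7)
The plan is to analyze $p(t)$ window-by-window, proving contraction over each full intermittent window $[t_{2k},t_{2k+2})$ and then iterating. On the CTS $[t_{2k},t_{2k+1}]$ the delayed differential inequality is of Halanay type with net negative rate (because $m_1>m_2$), so $p$ should decay exponentially at rate $\sigma^{\bullet}$. On the RTS $(t_{2k+1},t_{2k+2})$ we cannot do better than bound the growth by the differential inequality $\dot p \le (m_2+m_3)\sup_{s\in[t-\tau^{\bullet},t]} p(s)$, giving exponential growth at rate at most $m_2+m_3$. The balance condition $\sigma^{\bullet}(\underline{\theta}-\tau^{\bullet})>(m_2+m_3)(\overline{\theta}-\underline{\theta})$ is exactly what forces decay to dominate growth over a window.

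Concretely, I would introduce the sliding supremum $\Phi_k=\sup_{s\in[t_{2k}-\tau^{\bullet},t_{2k}]}p(s)$ and show two recursions. For the CTS, I would apply a standard Halanay comparison argument: define $v(t)=\Phi_k\,e^{-\sigma^{\bullet}(t-t_{2k})}$ with $\sigma^{\bullet}$ the unique positive root of $\sigma-m_1+m_2e^{\sigma\tau^{\bullet}}=0$, and check that $\dot v \ge -m_1 v+m_2 v(t-\tau(t))$ on $[t_{2k},t_{2k+1}]$ with $v\ge p$ on the initial interval $[t_{2k}-\tau^{\bullet},t_{2k}]$. A contradiction argument at the first crossing (using $\tau(t)\le\tau^{\bullet}<\underline{\theta}$ so the delayed history lies within $[t_{2k}-\tau^{\bullet},t]$) then gives $p(t)\le\Phi_k\,e^{-\sigma^{\bullet}(t-t_{2k})}$ throughout the CTS. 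For the RTS, a direct Gronwall-type estimate on $\dot p\le m_3 p+m_2\sup_{s\in[t-\tau^{\bullet},t]}p(s)\le(m_2+m_3)\sup p$ yields $p(t)\le p(t_{2k+1})\,e^{(m_2+m_3)(t-t_{2k+1})}$, hence $\sup_{[t_{2k+1},t_{2k+2}]}p\le\Phi_k\,e^{-\sigma^{\bullet}(t_{2k+1}-t_{2k})}\,e^{(m_2+m_3)(t_{2k+2}-t_{2k+1})}$.

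Combining these, and using $\tau^{\bullet}<\underline{\theta}$ so that $[t_{2(k+1)}-\tau^{\bullet},t_{2(k+1)}]$ lies entirely inside the previous RTS (union with the tail of the previous CTS), I can bound $\Phi_{k+1}$ in terms of $\Phi_k$ by a factor no larger than $\exp\!\bigl(-\sigma^{\bullet}(\underline{\theta}-\tau^{\bullet})+(m_2+m_3)(\overline{\theta}-\underline{\theta})\bigr)$, which is strictly less than $1$ by hypothesis. Iterating yields $\Phi_k\le\Phi_0\,\mu^k$ for some $\mu\in(0,1)$, and inserting this into the within-window bounds converts the discrete geometric decay into continuous exponential decay $p(t)\le C\,e^{-\lambda t}$, establishing ExSta. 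The main obstacle I anticipate is the bookkeeping at window boundaries: the delay $\tau(t)$ can drag the argument $t-\tau(t)$ back across an RTS/CTS interface, so I must use $\tau^{\bullet}<\underline{\theta}$ carefully to ensure the Halanay comparison remains valid on the whole CTS and that the supremum $\Phi_{k+1}$ is controlled purely by the previous window's estimates, with no overlap into still earlier windows.
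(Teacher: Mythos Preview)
The paper does not prove this lemma at all; it is imported verbatim from reference~\cite{liuchen2015b} and used as a black box, so there is nothing to compare your argument against in the present text. That said, your sketch is the standard Halanay-plus-Gronwall window-by-window argument that underlies results of this type, and it is correct in outline.

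One small tightening is needed in your RTS step. You write $p(t)\le p(t_{2k+1})\,e^{(m_2+m_3)(t-t_{2k+1})}$, but on $(t_{2k+1},t_{2k+2})$ the delayed argument $t-\tau(t)$ can reach back into the CTS, so the correct seed for the Gronwall bound is the sliding supremum $\sup_{s\in[t_{2k+1}-\tau^{\bullet},\,t_{2k+1}]}p(s)\le\Phi_k\,e^{-\sigma^{\bullet}(t_{2k+1}-t_{2k}-\tau^{\bullet})}$, not $p(t_{2k+1})$ itself. This is exactly what generates the $(\underline{\theta}-\tau^{\bullet})$ factor in your final contraction exponent, so your end formula is right even though the intermediate sentence is slightly loose. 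With that adjustment your recursion $\Phi_{k+1}\le\Phi_k\exp\bigl(-\sigma^{\bullet}(\underline{\theta}-\tau^{\bullet})+(m_2+m_3)(\overline{\theta}-\underline{\theta})\bigr)$ and the conversion to continuous exponential decay go through as you describe.
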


\section{FnTSta under AIC and QC using 2PM}\label{theory}
At first, we consider the following dynamical system
\begin{align}\label{standard}
\dot{e}_i(t)=\alpha_1 e_i(t)+\alpha_2 e_i(t-\tau_i(t)))+u_i(t),
\end{align}
where $e(t)=(e_{1}(t),\cdots,e_{n}(t))^{T}\in R^n$; time-varying delay $\tau_{i}(t)$ is bounded; $\alpha_1>0, \alpha_2>0$; controller $u_i(t)$ in AIC is defined by
\begin{align}\label{control}
u_i(t)=
\left\{
\begin{array}{l}
-\mathrm{sgn}(q(e_i(t)))(\alpha_3|q(e_i(t))|+\alpha_4);\\
~~~~~~~~~~~~~~~~~~~~~~~~~~~~~~t\in[t_{2k}, t_{2k+1}]\\
0; ~~~~~~~~~~~~~~~~~~~~~~~~~~~t\in(t_{2k+1},t_{2k+2})
\end{array}
\right.
\end{align}
where $\alpha_3>0, \alpha_4>0$; $q(\cdot)$ is the quantizer in Definition \ref{zer}.

\begin{definition} (\cite{BB2000})
The origin $e(t)=0$ is said to be a FnTSta equilibrium if the FnT convergence condition and Lyapunov stability condition hold globally.
\end{definition}

\subsection{FnTSta with bounded small delay using 2PM}
\begin{theorem}\label{smallthm}
For system (\ref{standard}) under AIC (\ref{control}) and QC (\ref{zer}), Assumptions \ref{as1}-\ref{sad} hold. FnTSta can be realized if there exist positive constants $\varpi_1$ and $\varpi_2$, such that
\begin{align}
&(1-\overline{\gamma})\alpha_3>\alpha_1+\alpha_2,\label{condition1}\\
&\sigma^{\bullet}(\underline{\theta}-\tau^{\bullet})-(\alpha_1+\alpha_2)(\overline{\theta}-\underline{\theta})>0,\label{condition2}\\
&\alpha_2-\alpha_4+\varpi_1<0,\label{condition4}\\
&\alpha_1+\alpha_2-\varpi_2<0,\label{condition5}\\
&(\varpi_1+\varpi_2)\tau^{\bullet}-\varpi_{1}\underline{\theta}+\varpi_{2}(\overline{\theta}-\underline{\theta})\le-\phi^{\bullet}<0,\label{condition6}
\end{align}
where $\sigma^{\bullet}>0$ is the unique positive solution of the equation
\begin{align}
\sigma^{\bullet}+\alpha_1-(1-\overline{\gamma})\alpha_3+\alpha_2e^{\sigma^{\bullet}{\tau}^{\bullet}}=0.\label{condition3}
\end{align}
\end{theorem}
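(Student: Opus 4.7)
The plan is to apply the two-phase method to $V(t)=\|e(t)\|_\infty=\max_i|e_i(t)|$ and split the analysis at the level set $\{V=1\}$. Writing $q(e_i)=(1+\gamma(e_i))e_i$ with $\gamma(e_i)\in[-\overline\gamma,\overline\gamma]$ (so $\mathrm{sgn}\,q(e_i)=\mathrm{sgn}\,e_i$) and computing $D^+|e_i|=\mathrm{sgn}(e_i)\dot e_i$, Definition \ref{zer} and Assumption \ref{inf} yield, on each CTS,
\begin{align*}
D^+V(t)\le -\bigl((1-\overline\gamma)\alpha_3-\alpha_1\bigr)V(t)+\alpha_2 V(t-\tau(t))-\alpha_4,
\end{align*}
and on each RTS,
\begin{align*}
D^+V(t)\le \alpha_1 V(t)+\alpha_2 V(t-\tau(t)).
\end{align*}

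In Phase 1, as long as $V(t)\ge 1$, the offset $-\alpha_4$ is non-positive and may be discarded, leaving exactly the hypothesis of Lemma \ref{large} with $m_1=(1-\overline\gamma)\alpha_3-\alpha_1$, $m_2=\alpha_2$ and $m_3=\alpha_1$. Condition (\ref{condition1}) supplies $m_1>m_2$, (\ref{condition3}) is the defining transcendental equation for $\sigma^\bullet$, and (\ref{condition2}) is the intermittent-window balance; Lemma \ref{large} then yields exponential decay of $V$ on this regime, so the level $V=1$ is reached in some finite time $T_1$ controlled by the initial data.

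In Phase 2, from $t\ge T_1$, the conditions (\ref{condition4}) and (\ref{condition5}) collapse the two regimes to constant rates: whenever $V(t-\tau(t))\le 1$ the CTS bound becomes $D^+V\le\alpha_2-\alpha_4\le -\varpi_1$, and as long as $V(t),V(t-\tau(t))\le 1$ the RTS bound becomes $D^+V\le\alpha_1+\alpha_2<\varpi_2$. These piecewise-constant rates provide a simple linear budget over one intermittent window, once one accounts for the ``phase lag'': during an initial strip of length $\tau^\bullet$ at the start of each CTS the delayed term $V(t-\tau(t))$ still references the preceding RTS, and analogously at the start of each RTS. Charging these two transient strips to the pessimistic growth rate $\varpi_2$, which is legitimate because Assumption \ref{sad} ensures the ``clean'' CTS block has positive length $\underline\theta-\tau^\bullet$, produces the per-window decrement
\begin{align*}
-\varpi_1(\underline\theta-\tau^\bullet)+\varpi_2(\overline\theta-\underline\theta+\tau^\bullet),
\end{align*}
which is exactly the left-hand side of (\ref{condition6}) and is at most $-\phi^\bullet<0$. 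Iterating this strict net decrease then forces $V$ to hit $0$ in a finite time $T_2$, so FnTSta occurs at $T_1+T_2$.

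The main obstacle is the Phase 2 bookkeeping. First, one must show that $V\le 1$ is essentially preserved for $t\ge T_1$ despite the delayed term reaching into Phase-1 values on an initial transient of length $\tau^\bullet$; this is an invariance-type argument exploiting the strict slack in (\ref{condition4}). Second, the per-window budget requires identifying the correct ``clean'' and ``transient'' sub-intervals so that the effective decrease time is $\underline\theta-\tau^\bullet$ and the effective growth time is $\overline\theta-\underline\theta+\tau^\bullet$; this partition---made possible only by $\tau^\bullet<\underline\theta$---is exactly what reconstructs (\ref{condition6}) from the Dini inequalities above.
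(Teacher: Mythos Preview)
Your Phase I is exactly the paper's: drop $-\alpha_4$, invoke Lemma \ref{large} with $m_1=(1-\overline\gamma)\alpha_3-\alpha_1$, $m_2=\alpha_2$, $m_3=\alpha_1$, and conditions (\ref{condition1}), (\ref{condition2}), (\ref{condition3}) give exponential decay, hence $\sup_{s\in[t-\tau^\bullet,t]}\|e(s)\|_\infty\le 1$ for all $t\ge T_1$.

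Your Phase II takes a genuinely different route. The paper does \emph{not} integrate the Dini inequalities directly with a transient-strip budget. Instead it introduces the shifted functions $V_1(t)=\|e(t)\|_\infty+\varpi_1 t$ on CTS and $V_2(t)=\|e(t)\|_\infty-\varpi_2 t$ on RTS, together with their running maxima $W_j(t)=\sup_{s\in[t-\tau^\bullet,t]}V_j(s)$. A Razumikhin-type step shows each $W_j$ is non-increasing on its span (at any point where $V_j(t)=W_j(t)$ one gets $\dot V_j(t)<0$ from (\ref{condition4}) resp.\ (\ref{condition5}), using $\|e\|_\infty\le1$). Chaining $W_1$ over a CTS and $W_2$ over the following RTS and converting back to $\sup_{s\in[t-\tau^\bullet,t]}\|e(s)\|_\infty$ at the two endpoints is exactly where both $\tau^\bullet$ contributions in (\ref{condition6}) arise; this yields a per-window drop of at least $\phi^\bullet$ in that sup, and extinction of the whole history after $\lceil 1/\phi^\bullet\rceil$ windows.

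Your transient-strip rationale is where the gap lies. Because Phase I already forces $\|e(s)\|_\infty\le1$ for \emph{all} $s\ge T_1-\tau^\bullet$, the delayed term never problematically ``references the preceding RTS'': your Dini bounds $D^+V<-\varpi_1$ on CTS and $D^+V<\varpi_2$ on RTS hold on the entire intervals with no transient needed. The reason one must give back $\tau^\bullet$ of progress is different: FnTSta requires the \emph{history} $\sup_{s\in[t-\tau^\bullet,t]}\|e(s)\|_\infty$ to reach zero, not $V(t)$ pointwise (at $e=0$ the quantizer and hence the controller vanish while the delayed drift need not, so $V(T_2)=0$ alone does not propagate). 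Tracking that sup, rather than $V$, is what costs a $\tau^\bullet$ at each CTS/RTS endpoint, and the paper's $W_j$ construction is precisely the device that makes this bookkeeping clean. Your budget lands on (\ref{condition6}), but the mechanism that justifies it is the running-maximum argument above, not the transient-strip picture you sketched.
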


\begin{proof}
We will use the 2PM in \cite{liuarxiv} to prove FnTSta of system (\ref{standard}) with AIC (\ref{control}) and QC (\ref{zer}). If
\begin{align}\label{initial}
\sup_{-\tau^{\bullet}\le s\le 0}\|e(s)\|_{\infty}=\sup_{-\tau^{\bullet}\le s\le 0}\max_{i=1,\cdots,n}|e_i(s)|>1,
\end{align}
we will prove from Phase I; otherwise, we will switch directly to Phase II.

{\bf Phase I}: We will prove that the $\|e(t)\|_{\infty}$ will decrease from initial values to $1$ in FnT.

For time $t$, suppose that the index $i(t)$ satisfying $|e_i(t)|=\|e(t)\|_{\infty}$. Differentiating $|e_i(t)|$,

{Case 1}: When $t\in[t_{2k}, t_{2k+1}]$:
\begin{align}
&\frac{d{|e_i(t)|}}{dt}=\mathrm{sgn}(e_{i}(t))\big[\alpha_1e_{i}(t)+\alpha_2e_{i}(t-\tau_{i}(t))\nonumber\\
&-\mathrm{sgn}(q(e_{i}(t)))(\alpha_3|q(e_{i}(t))|+\alpha_4)\big]\nonumber\\
\le&[\alpha_1-(1-\overline{\gamma})\alpha_3]|e_i(t)|+\alpha_2|e_i(t-\tau_i(t))|-\alpha_4\nonumber\\
\le&[\alpha_1-(1-\overline{\gamma})\alpha_3]|e_i(t)|+\alpha_2|e_i(t-\tau_i(t))|\label{controlspanresult}
\end{align}

{Case 2}: When $t\in (t_{2k+1}, t_{2k+2})$,
\begin{align}
&\frac{d{|e_i(t)|}}{dt}=\mathrm{sgn}(e_{i}(t))\big[\alpha_1e_{i}(t)+\alpha_2e_{i}(t-\tau_{i}(t))\big]\nonumber\\
\le&\alpha_1|e_i(t)|+\alpha_2|e_i(t-\tau_i(t))|\label{restspanresult}
\end{align}

According to Lemma \ref{large} and conditions (\ref{condition1}), (\ref{condition2}) and (\ref{condition3}), ExSta can be realized, there must exist a time $T_1$ (without loss of generality, we can let $T_1=t_{2k_1}$, i.e., from the $k_1$-th window in AIC), such that
\begin{align}\label{converge}
\sup_{t-\tau(t)\le s\le t}\|e(t)\|_{\infty}\le 1, t\ge T_1
\end{align}

{\bf Phase II}: We will prove that the $\|e(t)\|_{\infty}$ will evolve from $1$ to $0$ in FnT. At first, from condition (\ref{condition6}), we have
\begin{align}
&(\varpi_1+\varpi_2)\tau^{\bullet}-\varpi_{1}(t_{2k+1}-t_{2k})+\varpi_{2}(t_{2k+2}-t_{2k+1})\nonumber\\
\le&-\phi^{\bullet}<0,~~k\ge k_1\label{conditionnew}
\end{align}

For $t\in[t_{2k_1}, t_{2k_1+1}]$, we consider the function
\begin{align}
V_1(t)=\|e(t)\|_{\infty}+\varpi_{1}t
\end{align}
and the corresponding maximum-value function
\begin{align}
W_1(t)=\sup_{t-\tau^{\bullet}\le s\le t}V_1(s)
\end{align}

Obviously, $V_1(t)\le W_1(t)$. If $V_1(t)<W_1(t)$, then there must exist a constant $\varsigma_1>0$ such that $V_1(s)\le W_1(t)$ for $s\in (t, t+\varsigma_1)$, i.e., $W_1(s)=W_1(t)$ for $s\in (t,t+\varsigma_1)$.
	
Else if there exists a point $t\ge T_1$, $V_1(t)=W_1(t)$, for this time $t$, suppose that the index $i(t)$ satisfying $|e_i(t)|=\|e(t)\|_{\infty}$. Differentiating $V_1(t)$, we have
\begin{align}
&\dot{V}_1(t)=\frac{d}{dt}\big(|e_i(t)|+\varpi_1t\big)\nonumber\\
\le&(\alpha_1-(1-\overline{\gamma})\alpha_3)|e_i(t)|+\alpha_2|e_i(t-\tau_i(t))|-\alpha_4+\varpi_1\nonumber\\
\leq&\alpha_2-\alpha_4+\varpi_1<0
\end{align}

Therefore,
\begin{align*}
&\sup_{t-\tau^{\bullet}\le s\le t}\|e(s)\|_{\infty}+\varpi_{1}(t-\tau^{\bullet})\le W_1(t)\le W_1(T_1) \nonumber\\
\le&\sup_{t_{2k_1}-\tau^{\bullet} \le s \le t_{2k_1}}\|e(s)\|_{\infty}+\varpi_{1}T_1
\end{align*}
i.e., when $t=t_{2k_1+1}$,
\begin{align}
&\sup_{t_{2k_1+1}-\tau^{\bullet}\le s\le t_{2k_1+1}}\|e(s)\|_{\infty}\label{erbao}\\
\le &\sup_{t_{2k_1}-\tau^{\bullet} \le s \le t_{2k_1}}\|e(s)\|_{\infty}+\varpi_1\tau^{\bullet}-\varpi_{1}(t_{2k_1+1}-t_{2k_1})\nonumber
\end{align}

For $t\in(t_{2k_1+1}, t_{2(k_1+1)})$, we consider the function
\begin{align}
V_2(t)=\|e(t)\|_{\infty}-\varpi_{2}t
\end{align}
and the corresponding maximum-value function
\begin{align}
W_2(t)=\sup_{t-\tau^{\bullet} \le s \leq t}V_2(s)
\end{align}

Obviously, $V_2(t)\le W_2(t)$. If $V_2(t)<W_2(t)$, then there must exist a constant $\varsigma_2>0$ such that $V_2(s)\le W_2(t)$ for $s\in (t, t+\varsigma_2)$, i.e., $W_2(s)=W_2(t)$ for $s\in (t, t+\varsigma_2)$.
	
Else if there exists a point $t\ge t_{2k_{1}+1}$, $V_2(t)=W_2(t)$, for this time $t$, suppose that the index $i(t)$ satisfying $|e_i(t)|=\|e(t)\|_{\infty}$. Differentiating $V_2(t)$, we have
\begin{align}
&\dot{V}_2(t)=\frac{d}{dt}\big(|e_i(t)|-\varpi_2t\big)\\
\le&\alpha_1|e_i(t)|+\alpha_2|e_i(t-\tau_i(t))|-\varpi_2\le \alpha_1+\alpha_2-\varpi_2<0\nonumber
\end{align}
Therefore, combing with (\ref{erbao}), when $t=t_{2k_1+2}$,
\begin{align*}
&\sup_{t_{2k_1+2}-\tau^{\bullet} \le s \leq t_{2k_1+2}}\|e(s)\|_{\infty}-\varpi_{2}t_{2k_1+2}\nonumber\\
\le&W_2(t)\le W_2(t_{2k_1+1})\nonumber\\
=&\sup_{t_{2k_1+1}-\tau^{\bullet} \le s \leq t_{2k_1+1}}\big(\|e(s)\|_{\infty}-\varpi_{2}s\big)\nonumber\\
\le&\sup_{t_{2k_1+1}-\tau^{\bullet} \le s \leq t_{2k_1+1}}\|e(s)\|_{\infty}-\varpi_{2}(t_{2k_1+1}-\tau^{\bullet})\nonumber
\end{align*}
i.e.,
\begin{align*}
&\sup_{t_{2k_1+2}-\tau^{\bullet} \le s \leq t_{2k_1+2}}\|e(s)\|_{\infty}\nonumber\\
\le&\sup_{t_{2k_1}-\tau^{\bullet} \le s \le t_{2k_1}}\|e(s)\|_{\infty}+(\varpi_1+\varpi_2)\tau^{\bullet}\nonumber\\
&-\varpi_{1}(t_{2k_1+1}-t_{2k_1})+\varpi_{2}(t_{2k_1+2}-t_{2k_1+1})\nonumber\\
\le&\sup_{t_{2k_1}-\tau^{\bullet} \le s \le t_{2k_1}}\|e(s)\|_{\infty}-\phi^{\bullet}
\end{align*}
which means that after one intermittent window $[t_{2k_1}, t_{2(k_1+1)})$, the value $\sup_{t-\tau^{\bullet} \le s \le t}\|e(s)\|_{\infty}$ at the final point $t=t_{2(k_1+1)}$ decreases at least $\phi^{\bullet}$ than the value at the start point $t=t_{2k_1}$. With the same process, after $k^{\bullet}$ intermittent windows, the final value will satisfy
\begin{align}
&\sup_{t_{2(k_1+k^{\bullet})}-\tau^{\bullet} \le s \leq t_{2(k_1+k^{\bullet})}}\|e(s)\|_{\infty}\nonumber\\
\le&\sup_{t_{2k_1}-\tau^{\bullet} \le s \le t_{2k_1}}\|e(s)\|_{\infty}-\phi^{\bullet}k^{\bullet}\le 1-\phi^{\bullet}k^{\bullet}
\end{align}

Therefore, after $\lceil 1/\phi^{\bullet}\rceil$ intermittent windows, i.e., $e(t)=0$, for all $t\ge T_2$, where
\begin{align}
T_2\ge T_1+\lceil 1/\phi^{\bullet}\rceil\overline{\theta}
\end{align}
i.e., FnTSta for system (\ref{standard}) is proved completely.
\end{proof}

\begin{remark}
In fact, besides the $\infty$-norm, $1$-norm, $2$-norm and other norms can also be used to analyze FnTSta. Concrete norm is determined by the real questiones, here we omit it.
\end{remark}

\begin{remark}\label{add}
In fact, there can be multiple terms containing time delays, like $\sum_{j=1}^{j^{\star}}\alpha_2^{j}e_i(t-\tau_i^j(t))$, the proof process is similar to that in Theorem \ref{smallthm}, and in this case, $\alpha_2$ in Theorem \ref{smallthm} can be chosen as $\sum_{j=1}^{j^{\star}}\alpha_2^{j}$.
\end{remark}

\begin{remark}
For system (\ref{standard}) under AIC (\ref{control}) and QC (\ref{zer}), parameters $\alpha_1$, $\alpha_2$, $\overline{\theta}$, $\underline{\theta}$, $\tau^{\bullet}$ and $\overline{\gamma}$ are all fixed, from (\ref{condition2}), we define
\begin{align}\label{biyong1}
\sigma^{\bullet}>\underline{\sigma}=\frac{(\alpha_1+\alpha_2)(\overline{\theta}-\underline{\theta})}{(\underline{\theta}-\tau^{\bullet})}
\end{align}
Moreover, for this $\underline{\sigma}$, according to (\ref{condition3}), we can also define
\begin{align}\label{xiajie1}
\underline{\alpha}_3=\frac{\underline{\sigma}+\alpha_1+\alpha_2e^{\underline{\sigma}{\tau}^{\bullet}}}{(1-\overline{\gamma})}.
\end{align}

Obviously, larger $\alpha_3 (>\underline{\alpha}_3)$ will generate larger $\sigma^{\bullet} (>\underline{\sigma})$, which makes condition (\ref{condition2}) easier to be realized.

Furthermore, from (\ref{condition6}), we can also define
\begin{align}\label{xiajienew}
\varpi_1>\underline{\varpi}_1=\frac{(\alpha_1+\alpha_2)(\tau^{\bullet}+\overline{\theta}-\underline{\theta})}{\underline{\theta}-\tau^{\bullet}}
\end{align}
and from (\ref{condition4}), define
\begin{align}\label{xiajie2}
\alpha_4>\underline{\alpha}_4=\alpha_2+\underline{\varpi}_1
\end{align}

Similarly, larger $\alpha_4$ generates larger $\varpi_1$, which makes condition (\ref{condition6}) easier to be realized with larger $\phi^{\bullet}$.

Therefore, from these analysis, larger parameters $\alpha_3$ and $\alpha_4$ will be better for FnTSta with earlier settling time. In the next, we will apply the adaptive technique to realize this idea.
\end{remark}

\subsection{Adaptive FnTSta with bounded small delay using 2PM}
Next, we design the adaptive controller as
\begin{align}\label{controladaptive}
u_i(t)=
\left\{
\begin{array}{l}
-\mathrm{sgn}(q(e_i(t)))(\alpha_3(t)|q(e_i(t))|+\alpha_4(t));\\
~~~~~~~~~~~~~~~~~~~~~~~~~~~~~~t\in[t_{2k}, t_{2k+1}]\\
0; ~~~~~~~~~~~~~~~~~~~~~~~~~~~t\in(t_{2k+1},t_{2k+2})
\end{array}
\right.
\end{align}
where the adaptive rules are defined as:
\begin{align}\label{alpha3adatpive}
\alpha_3(t)=\left\{
\begin{array}{ll}
\alpha_3(0),&t=0\\
\alpha_3(t_{2k+1}),&t=t_{2k+2}\\
0,&t_{2k+1}<t<t_{2k+2}
\end{array}
\right.
\end{align}
\begin{align}
\alpha_4(t)=\left\{
\begin{array}{ll}
\alpha_4(0),&t=0\\
\alpha_4(t_{2k+1}),&t=t_{2k+2}\\
0,&t_{2k+1}<t<t_{2k+2}
\end{array}
\right.
\end{align}
with for $t_{2k}\le t\le t_{2k+1}$,
\begin{align}\label{alpha3adatpiverule}
\dot{\alpha}_3(t)=
\left\{
\begin{array}{ll}
\mu_1e^{\eta t}\|q(e(t))\|_{\infty};&\sup_{t-\tau^{\bullet}\le s\le t}\|e(t)\|_{\infty}> 1\\
\mu_2\|q(e(t))\|_{\infty};&\sup_{t-\tau^{\bullet}\le s\le t}\|e(t)\|_{\infty}\le 1
\end{array}
\right.
\end{align}
\begin{align}\label{alpha4adatpiverule}
\dot{\alpha}_4(t)=\left\{
\begin{array}{ll}
0;&\sup_{t-\tau^{\bullet}\le s\le t}\|e(t)\|_{\infty}> 1\\
\mu_3;&0<\sup_{t-\tau^{\bullet}\le s\le t}\|e(t)\|_{\infty}\le 1\\
0;&\sup_{t-\tau^{\bullet}\le s\le t}\|e(t)\|_{\infty}=0
\end{array}
\right.
\end{align}
where $\alpha_3(0)\ge 0$, $\alpha_4(0)\ge 0$, $\mu_1, \mu_2, \mu_3$ are positive scalars, parameter $\eta>0$ will be defined later.

\begin{theorem}
For the system (\ref{standard}) under QC and AIC with adaptive strategy (\ref{controladaptive})-(\ref{alpha4adatpiverule}), if $\eta>(1-\overline{\gamma})\underline{\alpha}_3-\alpha_1$, where $\underline{\alpha}_3$ is defined in (\ref{xiajie1}), then FnTSta can be realized.
\end{theorem}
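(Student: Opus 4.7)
The strategy is to import the two-phase method of Theorem \ref{smallthm} wholesale, noting that the only essential change is that the constants $\alpha_3,\alpha_4$ are now replaced by the time-varying trajectories $\alpha_3(t),\alpha_4(t)$. Consequently the proof reduces to showing that the adaptive laws drive $\alpha_3(t)$ past $\underline{\alpha}_3$ in Phase I and drive $\alpha_4(t)$ past $\underline{\alpha}_4$ in Phase II, both after only finitely many intermittent windows. Two bookkeeping observations will simplify the analysis: (i) on each rest span the controller is identically zero by (\ref{controladaptive}), so resetting $\alpha_3,\alpha_4$ to $0$ on $(t_{2k+1},t_{2k+2})$ is immaterial for the error dynamics; and (ii) restricted to the control spans, the sequences $\{\alpha_3(t_{2k+1})\}_k$ and $\{\alpha_4(t_{2k+1})\}_k$ are monotone non-decreasing thanks to (\ref{alpha3adatpiverule})--(\ref{alpha4adatpiverule}).

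\textbf{Phase I.} I argue by contradiction: assume $\sup_{t-\tau^{\bullet}\le s\le t}\|e(s)\|_{\infty}>1$ holds for every $t$. Then on every CTS $\|q(e(t))\|_{\infty}\ge(1-\overline{\gamma})\|e(t)\|_{\infty}\ge(1-\overline{\gamma})$, so accumulating the growth of $\alpha_3$ across CTS's yields
\begin{align*}
\alpha_3(t_{2k+1})\;\ge\;\mu_1(1-\overline{\gamma})\sum_{\ell=0}^{k}\int_{t_{2\ell}}^{t_{2\ell+1}} e^{\eta s}\,ds \;\longrightarrow\;+\infty,
\end{align*}
since each CTS has length at least $\underline{\theta}>0$ by Assumption \ref{absolute} and $t_{2\ell}\to\infty$. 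Hence $\alpha_3(t_{2k+1})\ge\underline{\alpha}_3$ for some finite index $k_1$. Once this threshold is crossed, the CTS bound (\ref{controlspanresult}) specialises to $\tfrac{d}{dt}|e_i(t)|\le[\alpha_1-(1-\overline{\gamma})\underline{\alpha}_3]|e_i(t)|+\alpha_2|e_i(t-\tau_i(t))|$, and Lemma \ref{large} applied with $m_1=(1-\overline{\gamma})\underline{\alpha}_3-\alpha_1$, $m_2=\alpha_2$, $m_3=\alpha_1+\alpha_2$ (whose hypotheses are guaranteed by the very definition of $\underline{\alpha}_3$ in (\ref{xiajie1})) delivers exponential decay of $\|e\|_{\infty}$ across windows, contradicting the standing assumption. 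This produces a finite $T_1=t_{2k_1}$ at which (\ref{converge}) holds, and Phase II may be entered.

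\textbf{Phase II.} For $t\ge T_1$ the lower branches of (\ref{alpha3adatpiverule})--(\ref{alpha4adatpiverule}) govern the CTS evolution. The key observation is that $\dot{\alpha}_4(t)=\mu_3>0$ whenever $0<\|e\|_{\infty}\le 1$, so $\alpha_4(t_{2k+1})$ grows at rate at least $\mu_3\underline{\theta}$ per window and in finite time exceeds $\underline{\alpha}_4$; meanwhile $\alpha_3$ is still above $\underline{\alpha}_3\ge\alpha_1/(1-\overline{\gamma})$ and remains non-decreasing. From the first window in which both (\ref{condition1}) and (\ref{condition4}) are in force, the computations of $\dot{V}_1$ and $\dot{V}_2$ in the Phase II part of Theorem \ref{smallthm}'s proof carry over verbatim (larger $\alpha_3(t),\alpha_4(t)$ only sharpen the bounds), so $\sup_{t-\tau^{\bullet}\le s\le t}\|e(s)\|_{\infty}$ decreases by at least $\phi^{\bullet}$ per window and hits $0$ after no more than $\lceil 1/\phi^{\bullet}\rceil$ further windows, yielding FnTSta.

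\textbf{Main obstacle.} The delicate step is the catch-up argument in Phase I. The gain $\alpha_3(t)$ is driven only on CTS, whereas $|e_i(t)|$ can grow at rate $\alpha_1+\alpha_2$ on RTS and at worst-case rate $\alpha_1-(1-\overline{\gamma})\alpha_3(t)$ on CTS before $\alpha_3$ crosses $\underline{\alpha}_3$. The exponential weight $e^{\eta t}$ in (\ref{alpha3adatpiverule}), together with the hypothesis $\eta>(1-\overline{\gamma})\underline{\alpha}_3-\alpha_1$, is engineered precisely so that the integrated gain of $\alpha_3$ outruns the worst-case exponential expansion of $|e_i|$ on CTS; quantifying this race---so as to certify that $\alpha_3$ truly reaches $\underline{\alpha}_3$ before $|e_i|$ blows up---is the technical heart of the proof and the point where the condition on $\eta$ gets fully used.
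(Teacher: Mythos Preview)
Your wait-until-the-gains-are-large strategy is different from the paper's, and as written it has two genuine gaps.

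\textbf{Phase I.} From the contradiction hypothesis $\sup_{t-\tau^{\bullet}\le s\le t}\|e(s)\|_{\infty}>1$ for all $t$ you infer $\|e(t)\|_{\infty}\ge1$ pointwise on every CTS, and hence $\|q(e(t))\|_{\infty}\ge1-\overline{\gamma}$. This does not follow: the supremum over the delay window may be realised at a past time while $\|e(t)\|_{\infty}$ itself is small, so your lower bound on the integrand is unjustified and the divergence of $\alpha_3(t_{2k+1})$ is not established. Notice also that your displayed estimate would not use the hypothesis $\eta>(1-\overline{\gamma})\underline{\alpha}_3-\alpha_1$ at all---the series $\sum_\ell\int_{t_{2\ell}}^{t_{2\ell+1}} e^{\eta s}\,ds$ diverges for every $\eta\ge0$---so the mechanism you sketch in the ``Main obstacle'' paragraph (a race against blow-up of $|e_i|$) is not how $\eta$ enters.

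\textbf{Handoff to Phase II.} Even if Phase I were repaired, a contradiction argument only yields a finite $T_1$ at which (\ref{converge}) first holds; it does \emph{not} guarantee $\alpha_3(T_1)\ge\underline{\alpha}_3$. Yet your Phase II explicitly assumes this (``$\alpha_3$ is still above $\underline{\alpha}_3$''). The gains may well be small when Phase II begins, so you cannot simply invoke the constant-gain computations of Theorem~\ref{smallthm}.

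\textbf{What the paper does instead.} The paper sidesteps both issues with the standard adaptive Lyapunov device: on CTS it augments the comparison function to $\overline{V}(t)=\|e(t)\|_{\infty}+\tfrac{1}{2\mu_1}e^{-\eta t}(\alpha_3(t)-\alpha_3^{\star})^2$. Differentiating, the adaptive law contributes $(\alpha_3(t)-\alpha_3^{\star})|q(e_i(t))|$, which combines with the controller term $-\alpha_3(t)|q(e_i(t))|$ to leave $-\alpha_3^{\star}|q(e_i(t))|$, irrespective of whether $\alpha_3(t)$ has crossed any threshold. The weight $e^{-\eta t}$ with $\eta=(1-\overline{\gamma})\alpha_3^{\star}-\alpha_1$ is chosen so that the residual $-\eta$ term matches the linear coefficient in front of $\overline{V}$; this is exactly where the hypothesis on $\eta$ is consumed, since it lets one take $\alpha_3^{\star}=(\eta+\alpha_1)/(1-\overline{\gamma})>\underline{\alpha}_3$. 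In Phase II the paper likewise carries $\tfrac{1}{2\mu_2}(\alpha_3(t)-\alpha_3^{\star})^2+\tfrac{1}{2\mu_3}(\alpha_4(t)-\alpha_4^{\star})^2$ inside $V_3$, so no assumption on the current size of the gains is needed and the per-window decrement argument goes through directly.
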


\begin{proof}
2PM is also used to prove the FnTSta under adaptive rules.

{\bf Phase I}: Prove that the $\|e(t)\|_{\infty}$ will evolve from initial values to $1$ in FnT.

At first, according to (\ref{alpha3adatpive}) and (\ref{alpha3adatpiverule}), $\alpha_3(t)$ is a nondecreasing function, if from some time point $\overline{t}$, $\alpha_3(t)$ is larger than the value $\alpha_3^{\star}$ ($>\underline{\alpha}_3$ in (\ref{xiajie1})), which can make (\ref{condition1}) and (\ref{condition2}) hold, then according to the analysis in Phase I of Theorem \ref{smallthm}, ExSta will be realized.

Otherwise, $\alpha_3(t)<\alpha_3^{\star}$ holds for all $t$. For any time $t$, suppose that the index $i(t)$ satisfying $|e_i(t)|=\|e(t)\|_{\infty}$.

{Case 1}: When $t\in[t_{2k}, t_{2k+1}]$, choose the function
\begin{align}
\overline{V}(t)=\|e(t)\|_{\infty}+\frac{1}{2\mu_1}e^{-\eta t}(\alpha_3(t)-\alpha_3^{\star})^2
\end{align}
where $\eta=(1-\overline{\gamma})\alpha_3^{\star}-\alpha_1$.
Differentiating it, we have
\begin{align}
&\dot{\overline{V}}(t)=\mathrm{sgn}(e_{i}(t))\big[\alpha_1e_{i}(t)+\alpha_2e_{i}(t-\tau_{i}(t))\nonumber\\
&-\mathrm{sgn}(q(e_{i}(t)))(\alpha_3(t)|q(e_{i}(t))|+\alpha_4(t))\big]\nonumber\\
&-\eta\frac{1}{2\mu_1}e^{-\eta t}(\alpha_3(t)-\alpha_3^{\star})^2+(\alpha_3(t)-\alpha_3^{\star})|q(e_i(t))|\nonumber\\
\le&[\alpha_1-(1-\overline{\gamma})\alpha_3^{\star}]|e_i(t)|+\alpha_2|e_i(t-\tau_i(t))|\nonumber\\
&-\eta\frac{1}{2\mu_1}e^{-\eta t}(\alpha_3(t)-\alpha_3^{\star})^2\nonumber\\
\le&[\alpha_1-(1-\overline{\gamma})\alpha_3^{\star}]\overline{V}(t)+\alpha_2\overline{V}(t-\tau_i(t))\label{controlspanresultada}
\end{align}

{Case 2}: When $t\in (t_{2k+1}, t_{2k+2})$, choose the function
\begin{align}
\overline{V}(t)=\|e(t)\|_{\infty}+\frac{1}{2}e^{-\eta t}(\alpha_3(t_{2k+1})-\alpha_3^{\star})^2
\end{align}
Differentiating it, we have
\begin{align}
\dot{\overline{V}}(t)=&\mathrm{sgn}(e_{i}(t))\big[\alpha_1e_{i}(t)+\alpha_2e_{i}(t-\tau_{i}(t))\big]\nonumber\\
&-\eta\frac{1}{2}e^{-\eta t}(\alpha_3(t_{2k+1})-\alpha_3^{\star})^2\nonumber\\
\le&\alpha_1\overline{V}(t)+\alpha_2\overline{V}(t-\tau_i(t))\label{restspanresultada}
\end{align}

According to Lemma \ref{large}, ExSta for $\overline{V}(t)$ can be realized. According to the definition of $\overline{V}(t)$, one can also get that ExSta for $\|e(t)\|_{\infty}$ can be realized, i.e., $\|e(t)\|_{\infty}$ will decrease from initial values to $1$ in FnT. Without loss of generality, we also assume (\ref{converge}) holds.

{\bf Phase II}: Prove that the $\|e(t)\|_{\infty}$ will evolve from $1$ to $0$ in FnT.

The arguments for $\alpha_3^{\star}$ can also be applied on any sufficiently large constant $\alpha_4^{\star}$ ($>\underline{\alpha}_4$ in (\ref{xiajie2})), here we omit it. For fixed $\alpha_1$ and $\alpha_2$, we can choose a constant $\varpi_2>0$ such that (\ref{condition5}) holds. Thus, for this chosen $\alpha_4^{\star}$, we can pick a scalar $\varpi_1>0$ such that (\ref{xiajienew}) and (\ref{xiajie2}) hold, i.e.,
\begin{align}
&(\varpi_1+\varpi_2)\tau^{\bullet}-\varpi_{1}(t_{2k+1}-t_{2k})+\varpi_{2}(t_{2k+2}-t_{2k+1})\nonumber\\
\le&-\phi^{\bullet}<0,~~k\ge k_1\label{conditionnewada}
\end{align}

For $t\in[t_{2k_1}, t_{2k_1+1}]$, we consider the function
\begin{align}
V_3(t)=&\|e(t)\|_{\infty}+\frac{1}{2\mu_2}(\alpha_3(t)-\alpha_3^{\star})^2+\frac{1}{2\mu_3}(\alpha_4(t)-\alpha_4^{\star})^2\nonumber\\
&+\varpi_{1}t\label{fxt1}
\end{align}
where the parameters $\mu_2$ and $\mu_3$ are chosen to satisfy
\begin{align}
\frac{(\alpha_3^{\star})^2}{2\mu_2}+\frac{(\alpha_4^{\star})^2}{2\mu_3}-\phi^{\bullet}=-\phi^{\star}<0
\end{align}
and the corresponding maximum-value function
\begin{align}
W_3(t)=\sup_{t-\tau^{\bullet}\le s\le t}V_3(s)
\end{align}

Obviously, $V_3(t)\le W_3(t)$. If $V_3(t)<W_3(t)$, then there must exist a constant $\varsigma_3>0$ such that $V_3(s)\le W_3(t)$ for $s\in (t, t+\varsigma_3)$, i.e., $W_3(s)=W_3(t)$ for $s\in (t,t+\varsigma_3)$.
	
Else if there exists a point $t\ge T_1$, $V_3(t)=W_3(t)$, for this time $t$, suppose that the index $i(t)$ satisfying $|e_i(t)|=\|e(t)\|_{\infty}$. Differentiating $V_3(t)$, we have
\begin{align}
&\dot{V}_3(t)\nonumber\\
\le&(\alpha_1-(1-\overline{\gamma})\alpha_3^{\star})|e_i(t)|+\alpha_2|e_i(t-\tau_i(t))|-\alpha_4^{\star}+\varpi_1\nonumber\\
\leq&\alpha_2-\alpha_4^{\star}+\varpi_1<0
\end{align}

Therefore,
\begin{align*}
&\sup_{t-\tau^{\bullet}\le s\le t}\|e(s)\|_{\infty}+\varpi_{1}(t-\tau^{\bullet})\le W_3(t)=W_3(T_1) \nonumber\\
\le&\sup_{t_{2k_1}-\tau^{\bullet} \le s \le t_{2k_1}}\|e(s)\|_{\infty}+\varpi_{1}T_1+\frac{(\alpha_3^{\star})^2}{2\mu_2}+\frac{(\alpha_4^{\star})^2}{2\mu_3}
\end{align*}
i.e., when $t=t_{2k_1+1}$,
\begin{align}
&\sup_{t_{2k_1+1}-\tau^{\bullet}\le s\le t_{2k_1+1}}\|e(s)\|_{\infty}\nonumber\\
\le &\sup_{t_{2k_1}-\tau^\bullet \le s \le t_{2k_1}}\|e(s)\|_{\infty}\nonumber\\
&+\varpi_1\tau^{\bullet}-\varpi_{1}(t_{2k_1+1}-t_{2k_1})+\frac{(\alpha_3^{\star})^2}{2\mu_2}+\frac{(\alpha_4^{\star})^2}{2\mu_3}\label{erbaoada}
\end{align}

For $t\in(t_{2k_1+1}, t_{2(k_1+1)})$, we consider the functions $V_2(t)$ and $W_2(t)$, then with the same process in Theorem \ref{smallthm}, and combing with (\ref{erbaoada}), when $t=t_{2k_1+2}$,
\begin{align*}
&\sup_{t_{2k_1+2}-\tau^{\bullet} \le s \leq t_{2k_1+2}}\|e(s)\|_{\infty}-\varpi_{2}t_{2k_1+2}\nonumber\\
\le&W_2(t)=W_2(t_{2k_1+1})\nonumber\\
=&\sup_{t_{2k_1+1}-\tau^{\bullet} \le s \leq t_{2k_1+1}}\big(\|e(s)\|_{\infty}-\varpi_{2}s\big)\nonumber\\
\le&\sup_{t_{2k_1+1}-\tau^{\bullet} \le s \leq t_{2k_1+1}}\|e(s)\|_{\infty}-\varpi_{2}(t_{2k_1+1}-\tau^{\bullet})\nonumber
\end{align*}
i.e.,
\begin{align*}
&\sup_{t_{2k_1+2}-\tau^{\bullet} \le s \leq t_{2k_1+2}}\|e(s)\|_{\infty}\nonumber\\
\le&\sup_{t_{2k_1}-\tau^{\bullet} \le s \le t_{2k_1}}\|e(s)\|_{\infty}\nonumber\\
&+(\varpi_1+\varpi_2)\tau^{\bullet}+\frac{(\alpha_3^{\star})^2}{2\mu_2}+\frac{(\alpha_4^{\star})^2}{2\mu_3}\nonumber\\
&-\varpi_{1}(t_{2k_1+1}-t_{2k_1})+\varpi_{2}(t_{2k_1+2}-t_{2k_1+1})\nonumber\\
\le&\sup_{t_{2k_1}-\tau^{\bullet} \le s \le t_{2k_1}}\|e(s)\|_{\infty}+\frac{(\alpha_3^{\star})^2}{2\mu_2}+\frac{(\alpha_4^{\star})^2}{2\mu_3}-\phi^{\bullet}\nonumber\\
=&\sup_{t_{2k_1}-\tau^{\bullet} \le s \le t_{2k_1}}\|e(s)\|_{\infty}-\phi^{\star}
\end{align*}
which means that after one intermittent window $[t_{2k_1}, t_{2(k_1+1)})$, the value $\sup_{t-\tau^{\bullet} \le s \le t}\|e(s)\|_{\infty}$ at the final point $t=t_{2(k_1+1)}$ decreases at least $\phi^{\star}$ than the value at the start point $t=t_{2k_1}$. With the same process, after $k^{\star}$ intermittent windows, the final value will satisfy
\begin{align}
&\sup_{t_{2(k_1+k^{\star})}-\tau^{\bullet} \le s \leq t_{2(k_1+k^{\star})}}\|e(s)\|_{\infty}\nonumber\\
\le&\sup_{t_{2k_1}-\tau^{\bullet} \le s \le t_{2k_1}}\|e(s)\|_{\infty}-\phi^{\star}k^{\star}\le 1-\phi^{\star}k^{\star}
\end{align}

Therefore, after $\lceil 1/\phi^{{\star}}\rceil$ intermittent windows, i.e., $e(t)=0$, for all $t\ge T_2$, where
\begin{align}
T_2\ge T_1+\lceil 1/\phi^{{\star}}\rceil\overline{\theta}
\end{align}
i.e., FnTSta for system (\ref{standard}) is proved completely.
\end{proof}

\begin{remark}
Adaptive rules for $1$-norm and $2$-norm can also be designed, which can be found in \cite{liuarxiv}, here we omit it.
\end{remark}

\section{Finite time synchronization of neural networks with delays}\label{app}
As an application of FnTSta results in the last section, we will investigate FnTSyn for master-slave coupled DNNs.

The master system is
\begin{align}\label{master}
\dot{x}_{i}(t)=&-d_{i}x_{i}(t)+\sum_{j=1}^na_{ij}f_{j}(x_{j}(t))\nonumber\\
&+\sum_{j=1}^nb_{ij}g_{j}(x_{j}(t-\tau_{ij}(t)))+I_{i},
\end{align}
where $x_{i}(t)\in R$; $d_{i}>0$; $f_{j}(\cdot)$ and $g_j(\cdot)$ are $R\rightarrow R$ functions; $I_i$ are inputs.

The slave system is
\begin{align}\label{slave}
\dot{y}_{i}(t)=&-d_{i}y_{i}(t)+\sum_{j=1}^na_{ij}f_{j}(y_{j}(t))\nonumber\\
&+\sum_{j=1}^nb_{ij}f_{j}(g_{j}(t-\tau_{ij}(t)))+I_{i}+u_{i}(t),
\end{align}

The synchronization error is defined by $e_{i}(t)=y_{i}(t)-x_{i}(t), i=1,2,\cdots,n$, whose dynamics are
\begin{align}\label{NNerror}
&\dot{e}_i(t)\\
=&-d_{i}e_{i}(t)+\sum_{j=1}^na_{ij}\tilde{f}_j(e_j(t))+\sum_{j=1}^nb_{ij}\tilde{g}_j(e_j(t-\tau_{ij}(t)))\nonumber
\end{align}
where $\tilde{f}_j(e_j(t))=f_j(y_j(t))-f_j(x_j(t))$ and $\tilde{g}_j(e_j(t))=g_j(y_j(t))-g_j(x_j(t))$. We assume that for $j=1,\cdots,n$,
\begin{align}\label{lip}
|\tilde{f}_j(e_j(t))|\le L_f|e_j(t)|, ~~|\tilde{g}_j(e_j(t))|\le L_g|e_j(t)|
\end{align}

\begin{definition}
System (\ref{NNerror}) is said to achieve FnTSyn if there exists a finite time $T$, $e_{i}(t)=0$ for $t\geq T,i=1,2,\cdots,n$.
\end{definition}

At first, in order to use theoretical results in Section \ref{theory}, suppose $|e_i(t)|=\|e(t)\|_{\infty}$, we consider the derivative of $|e_i(t)|$,
\begin{align}\label{ajust}
&\frac{d|e_i(t)|}{dt}=\mathrm{sgn}(e_i(t))[-d_{i}e_{i}(t)\nonumber\\
&+\sum_{j=1}^na_{ij}\tilde{f}_j(e_j(t))+\sum_{j=1}^nb_{ij}\tilde{g}_j(e_j(t-\tau_{ij}(t)))]\nonumber\\
\le&(-d_i+L_f\sum_{j=1}^n|a_{ij}|)|e_i(t)|+L_g\sum_{j=1}^n|b_{ij}||e_j(t-\tau_{ij}(t))|\nonumber\\
\le&\max_i(-d_i+L_f\sum_{j=1}^n|a_{ij}|)\|e(t)\|_{\infty}\nonumber\\
&+L_g\max_i\sum_{j=1}^n|b_{ij}|\|e(t-\tau_{ij}(t))\|_{\infty}
\end{align}
therefore, according to Remark \ref{add}, we define
\begin{align}\label{alpha12}
\alpha_1=\max_i(-d_i+L_f\sum_{j=1}^n|a_{ij}|), \alpha_2=L_g\max_i(\sum_{j=1}^n|b_{ij}|)
\end{align}

The object is to design controllers $u_i(t)$ in (\ref{control}) under AIC and QC, and prove its validity to realize FnTSyn. We have
\begin{theorem}(Small delay case) \label{NNsmallthm}
For (\ref{NNerror}) under AIC (\ref{control}) and QC (\ref{zer}), suppose Assumptions \ref{as1}-\ref{sad} hold. The FnTSyn can be realized if conditions (\ref{condition1})-(\ref{condition6}) hold, where $\alpha_1$ and $\alpha_2$ is defined in (\ref{alpha12}).
\end{theorem}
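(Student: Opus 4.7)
The plan is to reduce Theorem \ref{NNsmallthm} directly to Theorem \ref{smallthm} by showing that the error dynamics (\ref{NNerror}), when differentiated through the $\infty$-norm, satisfy exactly the same pair of differential inequalities that drove the proof of the general FnTSta result, with $\alpha_1$ and $\alpha_2$ specialised as in (\ref{alpha12}). The calculation in (\ref{ajust}) has already produced the crucial bound: whenever $|e_i(t)|=\|e(t)\|_\infty$, the unperturbed drift is dominated by $\alpha_1\|e(t)\|_\infty+\alpha_2\max_j\|e(t-\tau_{ij}(t))\|_\infty$. So the main task is to insert the AIC quantised controller $u_i(t)$ from (\ref{control}) into this inequality and to verify, step by step, that the bookkeeping of Phase I and Phase II from Theorem \ref{smallthm} goes through unchanged.

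First I would set up the two cases exactly as in Theorem \ref{smallthm}. In the CTS $t\in[t_{2k},t_{2k+1}]$, I would combine (\ref{ajust}) with the quantiser estimate $q(e_i(t))=(1+\gamma(e_i(t)))e_i(t)$ and the sign-cancellation $\mathrm{sgn}(e_i(t))\mathrm{sgn}(q(e_i(t)))=1$ (valid because $q$ preserves sign by Definition \ref{zer}) to obtain
\begin{align*}
\frac{d|e_i(t)|}{dt}\le[\alpha_1-(1-\overline{\gamma})\alpha_3]\|e(t)\|_\infty+\alpha_2\max_j\|e(t-\tau_{ij}(t))\|_\infty-\alpha_4.
\end{align*}
In the RTS $t\in(t_{2k+1},t_{2k+2})$, since $u_i\equiv0$, (\ref{ajust}) gives $d|e_i(t)|/dt\le\alpha_1\|e(t)\|_\infty+\alpha_2\max_j\|e(t-\tau_{ij}(t))\|_\infty$. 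These are precisely (\ref{controlspanresult}) and (\ref{restspanresult}) from Theorem \ref{smallthm}, so Remark \ref{add} applies and the multiple delayed terms $\tau_{ij}(t)$ are absorbed into a single bound (noting that each $\tau_{ij}(t)\le\tau^\bullet<\underline\theta$ by Assumption \ref{sad}).

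Next, I would invoke the 2PM verbatim. For Phase I, under conditions (\ref{condition1}), (\ref{condition2}) and (\ref{condition3}), Lemma \ref{large} yields exponential stability of $\|e(t)\|_\infty$, hence there exists $T_1=t_{2k_1}$ such that $\sup_{t-\tau^\bullet\le s\le t}\|e(s)\|_\infty\le1$ for $t\ge T_1$. For Phase II, I would take over the same auxiliary functions $V_1(t)=\|e(t)\|_\infty+\varpi_1 t$ and $V_2(t)=\|e(t)\|_\infty-\varpi_2 t$ with the corresponding maximum-value functions, and, using conditions (\ref{condition4}), (\ref{condition5}), (\ref{condition6}), show as in Theorem \ref{smallthm} that after each intermittent window $[t_{2k},t_{2(k+1)})$ the quantity $\sup_{t-\tau^\bullet\le s\le t}\|e(s)\|_\infty$ drops by at least $\phi^\bullet$. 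Consequently, after at most $\lceil 1/\phi^\bullet\rceil$ windows the error is driven to zero, giving FnTSyn with settling time bounded by $T_1+\lceil 1/\phi^\bullet\rceil\overline\theta$.

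The only non-routine step is the reduction itself — verifying that the NN-specific parameters $\alpha_1,\alpha_2$ defined by (\ref{alpha12}) via the Lipschitz bounds (\ref{lip}) really yield the same inequalities used in Theorem \ref{smallthm}, and that the multi-delay sum $\sum_j b_{ij}\tilde g_j(e_j(t-\tau_{ij}(t)))$ can be majorised by a single delayed $\infty$-norm term of the form $\alpha_2\max_j\|e(t-\tau_{ij}(t))\|_\infty$, which was the content of (\ref{ajust}). Once this reduction is made explicit, the rest of the proof is a direct quotation of Theorem \ref{smallthm}, so I would simply remark that the argument proceeds along identical lines and omit the repeated estimates.
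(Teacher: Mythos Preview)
Your proposal is correct and follows exactly the paper's approach: the paper states Theorem \ref{NNsmallthm} without a separate proof, relying entirely on the derivation (\ref{ajust}) to identify $\alpha_1,\alpha_2$ as in (\ref{alpha12}) and then on Remark \ref{add} to absorb the multiple delayed terms, after which Theorem \ref{smallthm} applies verbatim. Your write-up is slightly more detailed than the paper's treatment but adds nothing new and omits nothing essential.
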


Of course, adaptive technique can be also used, so we have
\begin{theorem}\label{NNadaptive}
For (\ref{NNerror}) under QC and AIC with adaptive strategy (\ref{controladaptive})-(\ref{alpha4adatpiverule}), if Assumptions \ref{as1}-\ref{sad} hold, and $\eta>(1-\overline{\gamma})\underline{\alpha}_3-\alpha_1$, where $\alpha_1$ and $\alpha_2$ is defined in (\ref{alpha12}), $\underline{\alpha}_3$ is defined in (\ref{xiajie1}), then FnTSyn can be realized.
\end{theorem}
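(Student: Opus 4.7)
The plan is to reduce Theorem \ref{NNadaptive} to the adaptive FnTSta theorem already established for the generic delayed scalar system (\ref{standard}), by exploiting the bound derived in (\ref{ajust}) together with Remark \ref{add}. The key observation is that for the index $i(t)$ at which $|e_i(t)|=\|e(t)\|_\infty$, the chain of inequalities in (\ref{ajust}) produces exactly the scalar-type estimate
\begin{align*}
\frac{d|e_i(t)|}{dt} \le \alpha_1 \|e(t)\|_\infty + L_g\sum_{j=1}^n |b_{ij}|\,\|e(t-\tau_{ij}(t))\|_\infty
\end{align*}
with $\alpha_1$ as in (\ref{alpha12}). Since the delayed terms are multiple, Remark \ref{add} tells us that we may package them into the single effective coefficient $\alpha_2=L_g\max_i\sum_j|b_{ij}|$, which is precisely the second identity in (\ref{alpha12}).

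First I would plug the adaptive controller (\ref{controladaptive}) into the NN error system (\ref{NNerror}) and repeat the $\infty$-norm differentiation in (\ref{ajust}). On the CTS $[t_{2k},t_{2k+1}]$, the controller contributes $-\mathrm{sgn}(q(e_i))(\alpha_3(t)|q(e_i)|+\alpha_4(t))$, which by Definition \ref{zer} and Assumption \ref{inf} dominates $-(1-\overline{\gamma})\alpha_3(t)|e_i(t)|-\alpha_4(t)$; on the RTS $(t_{2k+1},t_{2k+2})$ the controller vanishes. This yields for $|e_i(t)|=\|e(t)\|_\infty$ exactly the same pair of differential inequalities as in the proof of the preceding adaptive theorem for (\ref{standard}), but with the present $\alpha_1,\alpha_2$ from (\ref{alpha12}).

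Next, I would replay Phase I: assuming for contradiction that $\alpha_3(t)<\alpha_3^\star$ for all $t$ where $\alpha_3^\star>\underline{\alpha}_3$ is large enough to satisfy (\ref{condition1})--(\ref{condition2}), I form the auxiliary function $\overline{V}(t)=\|e(t)\|_\infty+\tfrac{1}{2\mu_1}e^{-\eta t}(\alpha_3(t)-\alpha_3^\star)^2$ on CTS and its analogue with frozen $\alpha_3(t_{2k+1})$ on RTS, using the hypothesis $\eta>(1-\overline{\gamma})\underline{\alpha}_3-\alpha_1$ to absorb the adaptive-law cross-term. The bounds (\ref{controlspanresultada}) and (\ref{restspanresultada}) carry over verbatim with $\alpha_1,\alpha_2$ from (\ref{alpha12}), so Lemma \ref{large} yields exponential decay of $\overline{V}(t)$ and hence finite-time entrance into $\sup_{t-\tau^\bullet\le s\le t}\|e(s)\|_\infty\le 1$ at some $T_1=t_{2k_1}$.

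Then I would execute Phase II in the same pattern as the previous adaptive proof: pick $\alpha_4^\star>\underline{\alpha}_4$, choose $\varpi_2$ satisfying (\ref{condition5}) and $\varpi_1$ satisfying (\ref{xiajienew}) so that (\ref{conditionnewada}) holds, and work with
\begin{align*}
V_3(t)=\|e(t)\|_\infty+\frac{1}{2\mu_2}(\alpha_3(t)-\alpha_3^\star)^2+\frac{1}{2\mu_3}(\alpha_4(t)-\alpha_4^\star)^2+\varpi_1 t
\end{align*}
on CTS and $V_2(t)=\|e(t)\|_\infty-\varpi_2 t$ on RTS, together with their maximum-value envelopes. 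At contact points $V_j(t)=W_j(t)$ I differentiate and apply the NN-adapted bound from (\ref{ajust}); the adaptive cross-terms cancel exactly as before, giving $\dot V_3\le\alpha_2-\alpha_4^\star+\varpi_1<0$ and $\dot V_2\le\alpha_1+\alpha_2-\varpi_2<0$. Telescoping over one full intermittent window produces a net per-window drop of at least $\phi^\star=\phi^\bullet-(\alpha_3^\star)^2/(2\mu_2)-(\alpha_4^\star)^2/(2\mu_3)>0$, and after $\lceil 1/\phi^\star\rceil$ windows the envelope reaches zero, establishing FnTSyn. The main obstacle is purely bookkeeping: one must verify that the packaging into $(\alpha_1,\alpha_2)$ via (\ref{alpha12}) and Remark \ref{add} is legitimate in all branches of the adaptive argument (CTS vs.\ RTS, the three regimes of the adaptive rules (\ref{alpha3adatpiverule})--(\ref{alpha4adatpiverule}), and the maximum-value-function machinery), so that every inequality from the generic proof transcribes with the same sign and the same $\tau^\bullet$; no new analytic idea is required beyond what is already in the proof of the preceding adaptive theorem.
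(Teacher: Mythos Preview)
Your proposal is correct and follows exactly the approach the paper takes: the paper does not give a separate proof of Theorem~\ref{NNadaptive} at all, but states it as a direct application of the adaptive FnTSta theorem for (\ref{standard}) once the $\infty$-norm estimate (\ref{ajust}) and the packaging (\ref{alpha12}) (with Remark~\ref{add} for the multiple delay terms) are in hand. Your explicit replay of Phases I and II with the NN-specific $\alpha_1,\alpha_2$ is precisely what the paper leaves implicit.
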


\section{Numerical examples}\label{simu}
At first, let $\gamma=\frac{5}{19}$, $\gamma^{\prime}=\frac{1}{19}$, so $\overline{\gamma}=\frac{5}{19}$, according to (\ref{rho1rho2}),
\begin{align}
\rho_1=\frac{1-\gamma^{\prime}}{1+\gamma}=0.75,~~~\rho_2=\frac{1-\gamma}{1+\gamma^{\prime}}=0.7
\end{align}
by choosing $\pi_0=2$, we can define QC in (\ref{newquan}) as
\begin{align}\label{QCeg}
&\Pi=\{0\}\cup\{\cdots,\pm 7.2562, \pm 5.4422, \pm 3.8095, \pm 2.8571,\nonumber\\
&\pm 2, \pm 1.5, \pm 1.05, \pm 0.7875, \pm 0.5512, \pm 0.4134, \cdots\}
\end{align}
i.e., for any positive value $v$, if $v\in (\pi_{2l+1}, \pi_{2l})$, then
\begin{align}
q(v)=\left\{
\begin{array}{ll}
\pi_{2l+1}& v\le\frac{19}{18}\pi_{2l+1}\\
\pi_{2l}& \mathrm{otherwise}
\end{array}\right.
\end{align}
else, if $v\in (\pi_{2l}, \pi_{2l-1})$, then
\begin{align}
q(v)=\left\{
\begin{array}{ll}
\pi_{2l}& v\le\frac{19}{14}\pi_{2l}\\
\pi_{2l-1}& \mathrm{otherwise}
\end{array}\right.
\end{align}
else $q(v)=-q(-v), v<0$, and $q(0)=0$.

Next, AIC is defined as: for $k\ge 2$
\begin{align*}
t_{2k-1}=k-0.2+0.1\sum_{l=1}^{k-1}\sin(l),~ t_{2k}=k+0.1\sum_{l=1}^{k-1}\sin(l),
\end{align*}
where $t_{0}=0, t_1=0.8, t_{2}=1$, which is equivalent to
\begin{align}
&t_{2k+1}-t_{2k}=0.8+0.1\sin k\ge 0.7=\underline{\theta}, \\
&t_{2k+2}-t_{2k}=1+0.1\sin k\le 1.1=\overline{\theta}
\end{align}

Now, we choose the master NN as:
\begin{align}\label{mnn}
\dot{x}_i(t)=&-x_i(t)+\sum_{j=1}^3{a}_{ij}f(x_j(t))\nonumber\\
&+0.02\sum_{j=1}^3f(x_j(t-\tau_{ij}(t)))
\end{align}
where $x(t)=(x_1(t), x_2(t), x_3(t))^T$, $\tau_{ij}(t)=0.4+0.1\sin((i+2j)t)$, $f(x_i(t))=(|x_i(t)+1|-|x_i(t)-1|)/2$,
\begin{align*}
A=(a_{ij})=\left(
\begin{array}{ccc}
1.25&-3.2&-3.2\\
-3.2&1.1&-4.4\\
-3.2&4.4&1
\end{array}\right)
\end{align*}
So, $\tau_{ij}(t)\le \tau^{\bullet}=0.5<\overline{\theta}$ and $L_f=1$.

Let initial values $x(\kappa)=(0.05, -0.1, 0.15)^T, \kappa\in [-0.5, 0]$, its chaotic dynamical behavior can be found in Fig. \ref{chaos}.

\begin{figure}[h]
\begin{center}
\includegraphics[width=0.5\textwidth,height=0.27\textheight]{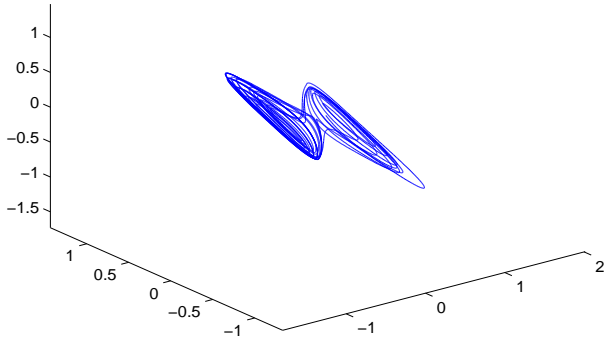}
\end{center}
\caption{Chaotic behavior of NN (\ref{mnn})}\label{chaos}
\end{figure}

Therefore, without external control, Syn cannot be realized, let alone FnTSyn. So, we add controller in the slave system:
\begin{align}\label{snn}
\dot{y}_i(t)=&-y_i(t)+\sum_{j=1}^3{a}_{ij}f(y_j(t))\nonumber\\
&+0.02\sum_{j=1}^3f(y_j(t-\tau_{ij}(t)))+u_i
\end{align}
where $y(\kappa)=(-1.5,0.8,-0.1)^T, \kappa\in [-0.5, 0]$, the controller $u_i$ under AIC and QC are defined in (\ref{control}).

According to (\ref{alpha12}), we have
\begin{align}
\alpha_1=\max_i(-1+\sum_{j=1}^3|a_{ij}|)=7.7, ~~\alpha_2=0.06
\end{align}

By simple calculations, one can get $\underline{\sigma}=15.52$ in (\ref{biyong1}), and $\underline{\varpi}_1=34.92$ in (\ref{xiajienew}), so $\alpha_3>\underline{\alpha}_3=222.4551$ in (\ref{xiajie1}) and $\alpha_4>\underline{\alpha}_4=34.98$ in (\ref{xiajie2}).

\begin{figure}
\begin{center}
\includegraphics[width=0.5\textwidth,height=0.27\textheight]{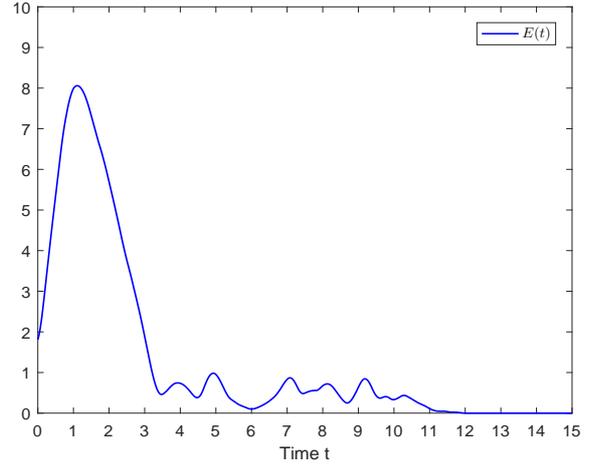}
\end{center}
\caption{Dynamics of $E(t)$ under constant control}\label{finitetime}
\end{figure}

Fig. \ref{finitetime} show that under the controller defined by (\ref{control}) with $\alpha_3=0.04$ and $\alpha_4=0.08$, FnTSyn can be realized, where synchronization error is defined as $E(t)=\|e(t)\|_2$. Of course, these values can be smaller than the above calculated theoretical lower bound.

Now, we apply the adaptive rules defined in (\ref{controladaptive})-(\ref{alpha4adatpiverule}), where $\alpha_3(0)=\alpha_4(0)=0$, and for $t_{2k}\le t\le t_{2k+1}$,
\begin{align}
\dot{\alpha}_3(t)=
\left\{
\begin{array}{ll}
0.01e^{0.2t}\|q(e(t))\|_{\infty};&\sup_{t-0.5\le s\le t}\|e(t)\|_{\infty}> 1\\
0.01\|q(e(t))\|_{\infty};&\sup_{t-0.5\le s\le t}\|e(t)\|_{\infty}\le 1
\end{array}
\right.
\end{align}
and
\begin{align}
\dot{\alpha}_4(t)=\left\{
\begin{array}{ll}
0;&\sup_{t-0.5\le s\le t}\|e(t)\|_{\infty}> 1\\
0.01;&0<\sup_{t-0.5\le s\le t}\|e(t)\|_{\infty}\le 1\\
0;&\sup_{t-0.5\le s\le t}\|e(t)\|_{\infty}=0
\end{array}
\right.
\end{align}

\begin{figure}
\begin{center}
\includegraphics[width=0.5\textwidth,height=0.27\textheight]{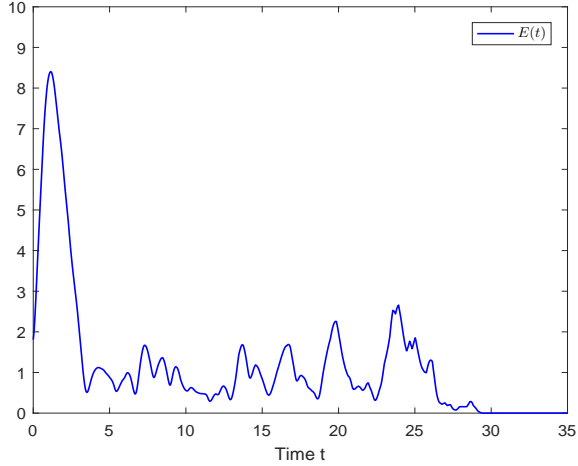}
\end{center}
\caption{Dynamics of $E(t)$ under adaptive control}\label{finitetimead}
\end{figure}

\begin{figure}
\begin{center}
\includegraphics[width=0.5\textwidth,height=0.27\textheight]{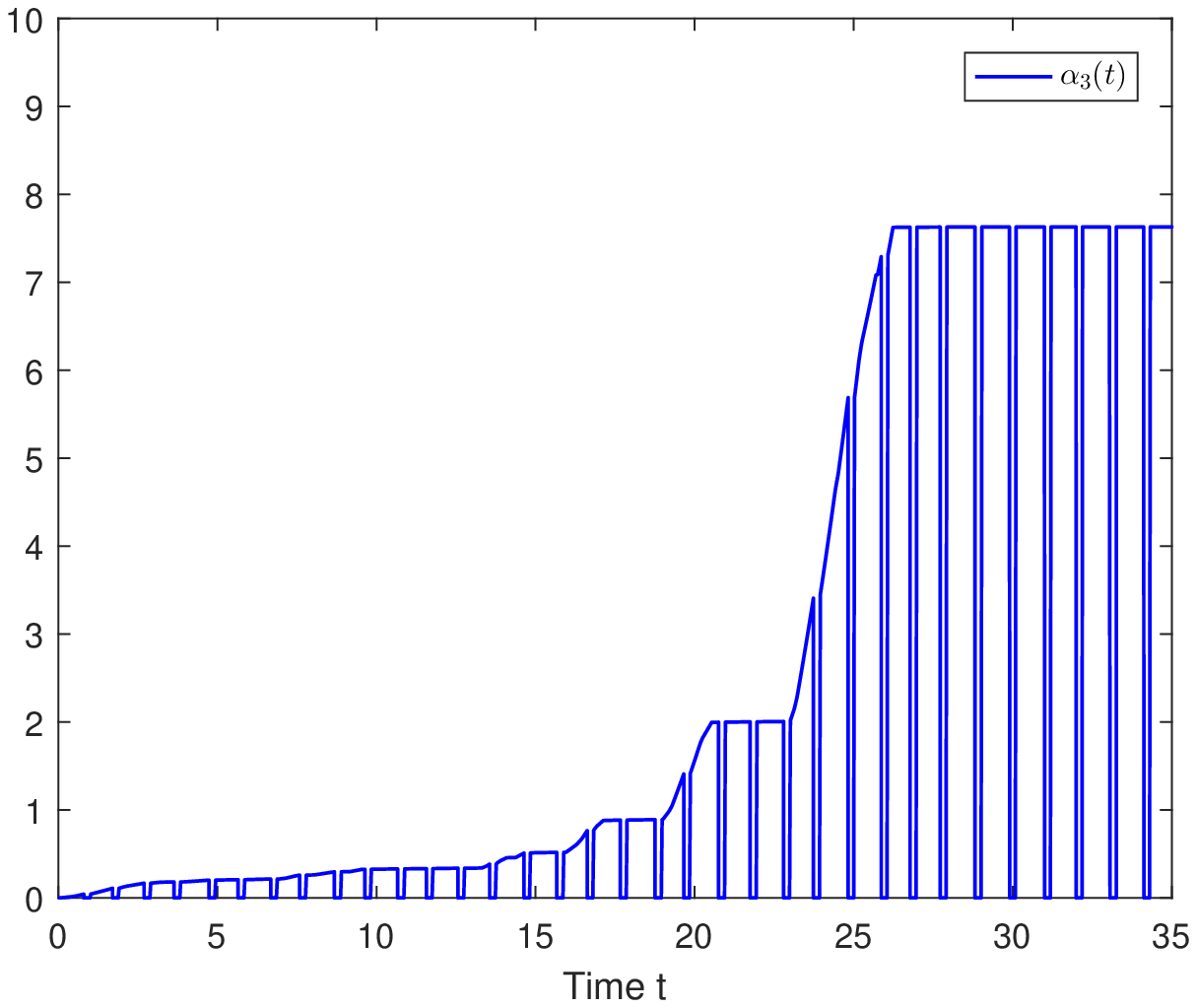}
\end{center}
\caption{Dynamics of $\alpha_3(t)$}\label{a3t}
\end{figure}

\begin{figure}
\begin{center}
\includegraphics[width=0.5\textwidth,height=0.27\textheight]{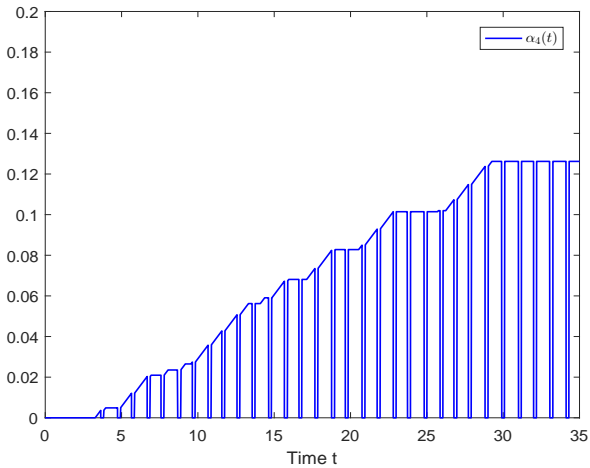}
\end{center}
\caption{Dynamics of $\alpha_4(t)$}\label{a4t}
\end{figure}

Fig. \ref{finitetimead} shows that FnTSyn can be finally achieved, and the dynamical behaviors of $\alpha_3(t)$ and $\alpha_4(t)$ can also be found in Fig. \ref{a3t} and Fig. \ref{a4t}.

\section{Conclusion}\label{con}
FnTSta of delayed systems via AIC and QC is investigated in this brief, where the bounded delay is required to be less than the infimum of CTS. AIC is a temporal control technique, and QC is a spatial control technique. A generalized QC is firstly defined. Then, using 2PM and $\infty$-norm, we set up the theories of FnTSta via AIC and QC. Moreover, the corresponding adaptive controller is also designed and proved. Finally, we apply obtained theories to FnTSyn of NNs.

Future researches include FnTSta with bounded large or unbounded delay and FxTSta with delay via AIC and QC.



%

\end{document}